\documentclass{eptcs}

\usepackage{graphicx,epic}

\usepackage{nets2,pict2e}
\usepackage{livre,prooftree,tabularx,array}

\usepackage{mathrsfs}

\newcommand{\ra}{\rightarrow}
\newcommand{\Ra}{\Rightarrow}
\newcommand{\X}{{\cal X}}

\newcommand{\NLog}{\mathscr{N}_{\mathcal{L}}}

\newtheorem{proposition}{Proposition}
\newtheorem{lemma}{Lemma}
\newtheorem{definition}{Definition}
\newtheorem{example}{Example}
\newenvironment{proof}{\noindent{\bf Proof:}}{}

\def\qed{\pushright{$\square$ \vspace{2 mm}}}
\def\pushright#1{{
   \parfillskip=0pt            
   \widowpenalty=10000         
   \displaywidowpenalty=10000  
   \finalhyphendemerits=0      
  %
   \leavevmode                 
   \unskip                     
   \nobreak                    
   \hfil                       
   \penalty50                  
   \hskip.2em                  
   \null                       
   \hfill                      
   {#1}                        
  %
   \par}}                      

\title{A new graphical calculus of proofs}

\author{Sandra Alves
\institute{DCC - Faculty of Science \& LIACC\\ 
University of Porto\\
R. do Campo Alegre 1021/55, \\
4169-007, Porto, Portugal}
\and
Maribel Fern\'andez
\institute{Department of Informatics\\
King's College London\\
Strand, London, WC2R 2LS U.K}
\and
Ian Mackie
\institute{LIX, CNRS UMR 7161\\
\'Ecole Polytechnique\\
91128 Palaiseau Cedex, France}
}

\begin{document}

\maketitle

\begin{abstract} 
  We offer a simple graphical representation for proofs of
  intuitionistic logic, which is inspired by proof nets and
  interaction nets (two formalisms originating in linear logic). This
  graphical calculus of proofs inherits good features from each, but
  is not constrained by them. By the Curry-Howard isomorphism, the
  representation applies equally to the lambda calculus, offering an
  alternative diagrammatic representation of functional computations.

\noindent
Keywords: 
Intuitionistic logic, lambda-calculus, visual representation, proof theory
\end{abstract}

\section{Introduction}\label{sec:intro}
There are many different ways to write proofs in a given logic, for
instance, natural deduction, sequent calculus and Hilbert systems are
well-known proof systems (we refer the reader
to~\cite{MackieIC:protad} for details).  Each syntax has advantages
and disadvantages. For example, classical logic works well in sequent
calculus because it allows the symmetry of the connectives to be seen;
a natural deduction presentation of classical logic is considered
artificial since rules are needed which do not correspond to the
introduction or elimination of a connective. Intuitionistic logic,
which links exceptionally well with computation, works better in
natural deduction, where proofs correspond to programs and there is a
notion of canonical proof (which is not the case in sequent calculus).

In this paper we focus on intuitionistic logic, and we aim at designing a
syntax that can both 
\begin{itemize}
\item
facilitate the visualisation and understanding of proofs, and 
\item
serve as a basis for the
implementation of the simplification rules in the logic. 
\end{itemize}
We choose intuitionistic logic as a basis for this work because of the
computational significance of the logic through the Curry-Howard
isomorphism: proofs in this logic correspond to functional programs;
logic formulas correspond to types of programs; and proof
normalisation corresponds to computation. Thus we are not looking for
a visual representation of \emph{truth}, but rather that of a
\emph{proof}. Furthermore, we are not interested
in just representing logics, but also in studying the reduction
process (normalisation) which corresponds to computation through the
Curry-Howard isomorphism.

Linear logic~\cite{GirardJY:linl} comes equipped with a graphical
syntax called proof nets. One of the motivations for the adoption of a
graphical syntax is that traditional syntaxes for logic, such as the
sequent calculus, have a lot of constraints to do with the
formalism---and not with the logic.  To illustrate this, we borrow a
well-known example from Girard~\cite{GirardJY:linlss}. Let $(r)$ and
$(s)$ be two logical rules, and consider a cut working on auxiliary
formulas (not the main formulas of the rules $r$ and $s$):
\[
\begin{prooftree}
\[
\vdash \Gamma , A
\justifies
\vdash \Gamma',\CF{A}
\using (r)
\]
\qquad
\[
\vdash \Aperp,\Delta
\justifies
\vdash \CF{\Aperp},\Delta'
\using (s)
\]
\justifies
\vdash \Gamma',\Delta'
\Cut
\end{prooftree}
\]
If we permute the cut rule up through $r$ and $s$, then there are two
possible choices depending on whether we first permute through $s$ or
through $r$. These two choices are represented by:
\[
\begin{prooftree}
\[\[
\vdash \Gamma,\CF{A} \qquad \vdash \CF{\Aperp},\Delta
\justifies
\vdash \Gamma,\Delta
\Cut
\]
\justifies
\vdash \Gamma',\Delta
\using (r)
\]
\justifies
\vdash \Gamma',\Delta'
\using (s)
\end{prooftree}
\qquad\qquad
\begin{prooftree}
\[\[
\vdash \Gamma,\CF{A} \qquad \vdash \CF{\Aperp},\Delta
\justifies
\vdash \Gamma,\Delta
\Cut
\]
\justifies
\vdash \Gamma,\Delta'
\using (s)
\]
\justifies
\vdash \Gamma',\Delta'
\using (r)
\end{prooftree}
\]
This demonstrates that permutation of rules is inherent in the syntax.
In addition, there is no canonical representation of proofs, even for
cut-free proofs. For example, consider the following two alternative
proofs (in classical logic), of the formula: $\Aperp,C,A \land \Bperp,
B\land \Cperp$, which differ only by the order in which 
axioms are combined:

\[
\begin{prooftree}
\prooftree
\justifies
\vdash \Aperp,A
\axiom
\endprooftree
\prooftree
\prooftree
\justifies
\vdash \Bperp,B
\axiom
\endprooftree
\prooftree
\justifies
\vdash \Cperp,C
\axiom
\endprooftree
\justifies
\vdash C,\Bperp,B\land \Cperp
\using (\land)
\endprooftree
\justifies
\vdash \Aperp,C,A\land \Bperp ,B\land \Cperp
\using (\land)
\end{prooftree}
\qquad
\qquad
\begin{prooftree}
\prooftree
\prooftree
\justifies
\vdash \Aperp,A
\axiom
\endprooftree
\prooftree
\justifies
\vdash \Bperp,B
\axiom
\endprooftree
\justifies
\vdash \Aperp,B,A\land \Bperp
\using (\land)
\endprooftree
\prooftree
\justifies
\vdash \Cperp,C
\axiom
\endprooftree
\justifies
\vdash \Aperp,C,A\land \Bperp ,B\land \Cperp
\using (\land)
\end{prooftree}
\]

A graphical syntax will free us from these inessential
permutations.  To illustrate the point, the previous two proofs are
given by the following proof net:
\begin{net}{150}{50}
\putHline{10}{50}{30}
\putHline{60}{50}{30}
\putHline{110}{50}{30}
\putVline{10}{40}{10}
\putVline{40}{40}{10}
\putVline{60}{40}{10}
\putVline{90}{40}{10}
\putVline{110}{40}{10}
\putVline{140}{40}{10}
\putland{50}{30}
\putland{100}{30}
\putline{40}{40}{1}{-1}{2.7}
\putline{90}{40}{1}{-1}{2.7}
\putline{60}{40}{-1}{-1}{2.7}
\putline{110}{40}{-1}{-1}{2.7}
\putVline{50}{10}{10}
\putVline{100}{10}{10}
\puttext{50}{0}{$A\land \Bperp$}
\puttext{100}{0}{$B\land \Cperp$}
\puttext{10}{0}{$\Aperp$}
\puttext{140}{0}{$C$}
\end{net}
In this visual representation, the nodes represent the connectives and
the edges are labelled by formulas.  Unlike the sequent calculus, we
do not need to impose an order on the introduction of
$\land$. Moreover, many of the permutation equivalences become
identities in proof nets.

In recent years, advances in this area have produced a better
understanding of the notion of proof, and a pleasing mathematical
theory has been developed. However, it is not clear that this is the
best way of visualising proofs or mechanising proof transformations.
Proof nets work very well for some fragments of the logic, but less
well for others where extra structure is required (boxes for
instance).  Related systems, such as interaction
nets~\cite{LafontY:intn} have been used to provide implementations of
proof nets. Indeed, they can be seen as a generalisation of proof
nets. However, the implementation nature of interaction nets often
leads to cluttering proofs with low-level details, which do not aid
the understanding of proofs.

In this paper we propose to take features from both proof nets and
interaction nets to build a hybrid system that takes some of the good
features of each, but is not limited by either of them. We put the
case forward for intuitionistic logic, and by the Curry-Howard
isomorphism, we also get results for the $\lambda$-calculus.  This
hybrid notation allows us to choose the level of detail that we want
to include in the proof. We can give either a high level, visual
description of the dynamics of proof simplification, or a low level
description which is suited for the implementation or for fine-grained
analyses of the cut elimination process.

This graphical notation can be formally defined as a system of port
graphs~\cite{AndreiO:PhD,AndreiO:calp}.  The visualisation tools
available for port graphs (see~\cite{AndreiO:PORGY}) can then be used
to study proofs and proof transformations.

Summarising, the main contributions of this paper are:

\begin{itemize}
\item A new graphical notation for proofs in intuitionistic logic
  where one can choose different levels of abstraction depending on
  the kind of analysis that will be done on proofs.
\item By the Curry-Howard isomorphism, the previous point gives a
  graphical notation for the $\lambda$-calculus with the same
  characteristics.
\end{itemize}

\paragraph{Related work} In addition to proof nets and interaction
nets, several graphical representations of proofs have been proposed
in the literature.  We can cite for instance the deduction graphs
defined by Geuvers and Loeb~\cite{GeuversH:natdvg} and Lamping's
sharing graphs~\cite{LampingJ:algolc}.  Deduction graphs are a
generalisation of natural deduction and Fitch style flag deduction;
they have both nodes and boxes. The latter are collections of nodes
that form a node themselves, and in this sense they are related to
Milner's bigraphs~\cite{MilnerR:bigraphs}, where the place graph
describes the nesting of nodes. However, deduction graphs do not have
an explicit way to represent sharing; they are not intended as a
notation for fine-grained analysis of resource management in proof
normalisation.  Sharing graphs (introduced in~\cite{LampingJ:algolc}
and further developed by Asperti and Guerrini~\cite{AspertiA:optifp})
were presented as a solution for the implementation of L\'evy's notion
of optimal reduction in the $\lambda$-calculus, and, as their name
suggests, emphasise the sharing of subexpressions: sharing is
explicit.

The rest of this paper is organised as follows. In the following
section we briefly recall intuitionistic logic, proof nets,
interaction nets and the notions of port graph and port graph
rewriting. In Sections 3 and 4 we give the visual representation of
the logic and the $\lambda$-calculus, respectively. In
section~\ref{sec:disc} we discuss the significance of this work, and
speculate on future work.

\section{Background}
In this section we recall (minimal) intuitionistic logic and sketch
some of the ideas behind proof nets and interaction nets, on which our
work is built upon.  To formalise the notation, we recall the notion
of port graph and port graph rewriting
from~\cite{AndreiO:PhD,AndreiO:calp}.  For more details on linear
logic and proof nets, we refer the reader to \cite{GirardJY:linl}. For
details and examples of interaction nets we refer
to~\cite{LafontY:intn}.

\paragraph{Intuitionistic logic}
In Figure~\ref{fig:ind} we give the natural deduction in sequent form
presentation of the logic. We give explicitly the structural rules,
which helps understanding the graphical notation later.
Adding $\lor$ is straightforward, but not included in this paper.

\begin{figure}[htbp]

\textbf{Identity and Structural Group:}
\[
\begin{prooftree}
 \justifies
 A\vdash A
 \axiom
 \end{prooftree}
\qquad
\begin{prooftree}
\Gamma,A,B,\Delta\vdash C
\justifies
\Gamma,B,A,\Delta\vdash C
\Ex
\end{prooftree}
\qquad
\begin{prooftree}
\Gamma \vdash B
\justifies
\Gamma, A \vdash B
\weakening
\end{prooftree}
\qquad
\begin{prooftree}
\Gamma, A, A \vdash B
\justifies
\Gamma,A \vdash B
\contraction
\end{prooftree}
\]
\textbf{Logical Group:}
\[
\begin{prooftree}
\Gamma\vdash A \qquad \Delta\vdash B
\justifies
\Gamma, \Delta\vdash A \land B
\andintro
\end{prooftree}
\qquad
\begin{prooftree}
\Gamma\vdash A \land B
\justifies
\Gamma\vdash A
\andeliml
\end{prooftree}
\qquad
\begin{prooftree}
\Gamma\vdash A \land B
\justifies
\Gamma\vdash B
\andelimr
\end{prooftree}
\]

\[
\begin{prooftree}
\Gamma,A\vdash B
\justifies
\Gamma\vdash A \limp B
\impintro
\end{prooftree}
\qquad
\begin{prooftree}
\Gamma\vdash A \limp B \qquad \Delta\vdash A
\justifies
\Gamma,\Delta\vdash B
\impelim
\end{prooftree}
\]
\caption{Intuitionistic Natural Deduction}
\label{fig:ind}
\end{figure}

The main computational interest for us is the normalisation procedure,
which transforms proofs by eliminating redundancies (called detours by
Prawitz~\cite{Prawitz:natdeduc}). The main cases are defined below,
 showing how the introduction of a connective followed by the
elimination of that same occurrence of the connective can be
transformed into a proof without the two rules. For simplicity, we
have ignored details of permutations of rules that might need to be
applied so that one of the following rules can be applied.

\begin{definition}(One Step Normalisation)\label{def:norm}
\begin{itemize}
\item $({\land}\I)$ followed by $({\land}\E_1)$: 
below  the double line indicates that (W) may be applied zero or
many times. There is a similar case for $({\land}\I)$ followed by
$({\land}\E_2)$.
\begin{center}
\renewcommand{\tabularxcolumn}[1]{>{\arraybackslash}m{#1}}
\begin{tabular}{lrr}
\begin{prooftree}
\prooftree
\prooftree
\pi_1\justifies\Gamma \vdash  A 
\endprooftree
\qquad
\prooftree
\pi_2\justifies\Delta \vdash B
\endprooftree
\justifies
\Gamma,\Delta \vdash A \land B
\andintro
\endprooftree
\justifies
\Gamma,\Delta \vdash A
\andeliml
\end{prooftree}
&
becomes 
&
\begin{prooftree}
\prooftree
\pi_1
\justifies
\Gamma \vdash A
\endprooftree

\Justifies
\Gamma,\Delta \vdash A
\weakening
\end{prooftree}
\end{tabular}
\end{center}

\item $({\limp}\I)$ followed by $({\limp}\E)$: $\pi_1^\prime$ is the
  proof $\pi_1$ where all axioms $ A \vdash A $ are replaced
  (substituted) by a proof of $\pi_2$. 
\begin{center}
\renewcommand{\tabularxcolumn}[1]{>{\arraybackslash}m{#1}}
\begin{tabular}{lrr}
\begin{prooftree}
\prooftree
\prooftree
\pi_1 \justifies\Gamma, A \vdash  B 
\endprooftree
\justifies
\Gamma \vdash A \limp B
\impintro
\endprooftree
\qquad
\prooftree
\pi_2 \justifies\Delta \vdash A
\endprooftree 
\justifies
\Gamma,\Delta \vdash B
\impelim
\end{prooftree}
&
becomes&
\begin{prooftree}
\pi_1^\prime
\justifies
\Gamma,\Delta \vdash B
\end{prooftree}
\end{tabular}
\end{center}
\end{itemize}
\end{definition}
In this presentation, the notion of substitution is important, and is
quite difficult to understand. An advantage of the visual
representation given below is that it clarifies this notion.

\paragraph{Proof Nets and Interaction Nets}\label{sec:in}
Proof nets were introduced as the graphical syntax for linear logic.
One of the motivations for the study of graphical presentations is to
free us from inessential permutations in proofs, as mentioned in the
Introduction.  Proof nets work very well for the multiplicative
fragment of the logic, but less well for the other fragments.  For
instance, for the exponentials, more complicated machinery is needed,
which takes us away from a uniform visual notation. More precisely,
exponentials are represented using boxes to group parts of the graph,
shown as a dotted line in the diagram below:

\begin{net}{130}{60}
\putbox{10}{25}{110}{30}{$N$}
\dashline{3}(0,15)(0,65)
\dashline{3}(0,65)(130,65)
\dashline{3}(130,65)(130,15)
\dashline{3}(0,15)(130,15)
\putVline{20}{5}{20}
\putVline{80}{5}{20}
\putVline{110}{5}{20}
\puttext{50}{0}{$?\Gamma$}
\puttext{50}{7}{$\cdots$}
\puttext{110}{0}{$!A$}
\end{net}
Boxes work at a different level to the other nodes in the graph, leading to 
a two-level syntax.
Interaction nets, on the other-hand, encode the box machinery in the 
same notation, as shown below.

An interaction net system~\cite{LafontY:intn} is specified by giving a
set $\Sigma$ of symbols, and a set $\IR$ of interaction rules. Each
symbol $\alpha\in \Sigma$ has an associated (fixed) \emph{arity}. An
occurrence of a symbol $\alpha\in\Sigma$ will be called an
\emph{agent}. If the arity of $\alpha$ is $n$, then the agent has
$n+1$ \emph{ports}: a distinguished one called the \emph{principal
port} depicted by an arrow, and $n$ \emph{auxiliary ports} labelled
$x_1,\ldots,x_n$ corresponding to the arity of the symbol. Such an
agent will be drawn in the following way:
\begin{net}{40}{41}
\putalpha{20}{20}
\putDvector{20}{10}{10}
\putline{12.6}{27.4}{-1}{1}{7}
\putline{27.4}{27.4}{1}{1}{7}
\puttext{20}{32}{$\cdots$}
\put(2.6,35){\makebox(0,0)[br]{$x_1$}}
\put(37.4,35){\makebox(0,0)[bl]{$x_n$}}
\end{net}
A net $N$ built on $\Sigma$ is a graph (not necessarily connected)
with agents at the vertices. The edges of the graph connect agents
together at the ports such that there is only one edge at every
port. 

A pair of agents $(\alpha,\beta)\in \Sigma\times\Sigma$ connected
together on their principal ports is called an \emph{active pair}; the
interaction net analog of a redex. An interaction rule
$((\alpha,\beta) \Lra N) \in \IR$ replaces an occurrence of the active
pair $(\alpha,\beta)$ by a net $N$. The rule must satisfy two
conditions: all free ports are preserved during reduction, 
and there is at most one rule for each pair of agents. 

Boxes are encoded in interaction nets using extra nodes, as shown below:
\begin{net}{150}{70}
\putfullprom{0}{0}{$N$}
\end{net}
Due to the constraints in the rules, interaction nets are easy to
implement, but extra rules are needed for management: the details of
copying and erasing for instance must be given in full detail. In
addition, each node in an interaction net has a unique principal port,
and this is fixed for each kind of agent, which means that the
reduction system is fixed as part of the encoding.

Our approach in this paper is to put forward a hybrid notation between
proof nets and interaction nets to get the best from each.  We will 
be able to use  interactive tools developed for interaction nets, such as
PORGY~\cite{AndreiO:PORGY} in order to visualise the encodings of
proofs. In fact, PORGY deals with port graphs, a class of graphs that
is more general than interaction nets, and which can be used to
formalise our hybrid notation. We recall port graphs below.

\paragraph{Port graphs}
A port graph~\cite{AndreiO:PhD} is a graph where nodes have
explicit connection points for edges, called {\em ports}.  Port graphs
were first identified as an abstract view of proteins and molecular
complexes.

Let $\mathscr{N}$ and $\mathscr{P}$ be two disjoint sets of node names
and port names respectively. A {\em p-signature} over $\mathscr{N}$
and $\mathscr{P}$ is a mapping $\nabla : \mathscr{N} \ra
2^{\mathscr{P}}$ which associates a finite
set of port names to a node name.
A p-signature can be extended with variables: $\nabla^\X :
\mathscr{N}\cup \X_{\mathscr{N}} \ra 2^{\mathscr{P}\cup
  \X_{\mathscr{P}}}$, where $\X_{\mathscr{P}}$ and $\X_{\mathscr{N}}$
are two disjoint sets of port name variables and node name variables
respectively.  A {\em labelled port graph} over a {\em p-signature}
$\nabla^\X$ is a tuple $G=\langle V_G,lv_G,E_G, le_G\rangle$ where:
$V_G$ is a finite set of nodes; 
$lv_G:V_G \ra \mathscr{N}\cup \X_{\mathscr{N}}$ is an  injective labelling 
function for nodes; 
$E_G\subseteq \{ \langle (v_1,p_1), (v_2,p_2)\rangle \ | \ v_i\in V_G, p_i\in
\nabla(lv_G(v_i))\cup \X_{\mathscr{P}} \}$ is a finite multiset of
edges; 
$le_G:E_G\ra (\mathscr{P}\cup \X_{\mathscr{P}})\times (\mathscr{P}\cup
\X_{\mathscr{P}})$ is an injective labelling function for edges such
that $le_G(\langle (v_1,p_1), (v_2,p_2)\rangle) = (p_1, p_2)$.  A port
may be associated to a state (for instance, active/inactive or
principal/auxiliary); this is formalised using a mapping from ports to
port states. Similarly, nodes can also have associated properties
(like colour or shape that can be used for visualisation purposes).

Let $G$ and $H$ be two port graphs  over the same p-signature
$\nabla^\X$. A {\em port graph morphism} $f:G\ra H$ maps
elements of $G$ to elements of $H$ preserving sources and targets
of edges, constant node names and associated port name sets, up to 
variable renaming.  We say that $G$ and $H$ are {\em isomorphic} if
$f:V_G\times\nabla(lv_G(V_G)) \ra V_H\times \nabla(lv_H(V_H))$ is
bijective.

A {\em port graph rewrite rule} $L \Ra R$ is itself represented as a
port graph consisting of two port graphs $L$ and $R$ over the same
p-signature and one special node $\Ra$, called {\em arrow node}
connecting them.  $L$ and $R$ are called the {\em left-} and {\em
  right-hand side} respectively.  The arrow node is used to represent
the interface of the rule; it has the following characteristics: for
each port $p$ in $L$, to which corresponds a non-empty set of ports
$\{p_1,\ldots, p_n\}$ in $R$, the arrow node has a unique port $r$ and
the incident directed edges $(p,r)$ and $(r,p_i)$, for all
$i=1,\ldots,n$; all ports from $L$ that are deleted in $R$ are
connected to the {\em black hole} port of the arrow node.  When the
correspondence between ports in the left- and right-hand side of the
rule is obvious we omit the ports and edges involving the arrow node.
In this way, we avoid dangling edges after rewriting (for more details
on graph rewriting we refer the reader
to~\cite{HabelMP01,Corradini:handbook}).

Let $L\Ra R$ be a port graph rewrite rule and $G$ a port graph such
that there is an injective port graph morphism $g$ from $L$ to $G$;
hence $g(L)$ is a subgraph of $G$.  A \emph{rewriting step} on $G$
using $L\Ra R$, written $G\ra_{L\Ra R}G'$, transforms $G$ into a new
graph $G'$ obtained from $G$ by replacing the subgraph $g(L)$ of $G$
by $g(R)$, and connecting $g(R)$ to the rest of the graph as specified
in the arrow node. We call $g(L)$ a redex. If there is no such
injective morphism, we say that $G$ is {\em irreducible} by $L\Ra R$.
Given a finite set $\R$ of rules, a port graph $G$ {\em rewrites} to
$G'$, denoted by $G\Ra_{\R}G'$, if there is a rule $r$ in $\R$ such
that $G\Ra_{r}G'$.  This induces a transitive relation on port graphs,
denoted by $\Ra_{\R}^*$. A port graph on which no rule is applicable
is in \emph{normal form}.

\section{Graphs from Proofs}
In this section we give a graphical representation of proofs in
intuitionistic logic.  The general idea is to interpret a proof $\pi$
of $\Gamma\vdash A$ as a port graph $\Gr{\pi}$ with edges representing
formulas in the following way (the second alternative borrows a
notation from electronic circuits):
\begin{center}
\begin{picture}(150,70)
\put(30,25){\framebox(40,25){$\Gr{\pi}$}}
\put(50,25){\line(0,-1){15}}
\put(50,0){\makebox(0,0)[b]{$A$}}
\put(35,50){\line(0,1){15}}
\put(65,50){\line(0,1){15}}
\put(50,55){\makebox(0,0){$\cdots$}}
\put(50,65){\makebox(0,0){$\Gamma$}}

\puttext{85}{40}{or}
\putnline{120}{50}{15}
\putbox{100}{25}{40}{25}{$\Gr{\pi}$}
\putVline{120}{10}{15}
\end{picture}
\end{center}

The nodes of the graph represent rules in the logic, edges will be
attached to ports which are optionally labelled with formulas. We will
explicitly distinguish the conclusion port when it is relevant. Node names
will be introduced on demand in the translation below. Later we will
see that we need some additional control nodes, so that there will not
be a 1-1 correspondence between logical rules and node names.  We give a
translation inductively over the structure of the proof, and refer to
Figure~\ref{fig:ind} for the rules that we are translating.

\begin{itemize}
\item If $\pi$ is an Axiom $A \vdash A$, then $\Gr{\pi}$ is simply a node 
$Ax$ with two ports, both with label $A$. in the diagrams we omit this node 
and draw simply a line as it is often done in proof nets.
\begin{net}{10}{30}
\puttext{5}{30}{$A$}
\putVline{5}{5}{20}
\puttext{5}{0}{$A$}
\end{net}
\item Exchange. If $\pi_1$ is a proof ending in $\Gamma, A, B, \Delta
\vdash C$, then we can build a proof $\pi$ of $\Gamma, B, A, \Delta \vdash
C$, using the exchange rule, and a graph $\Gr{\pi}$ where the exchange rule is encoded by exchanging two edges:
\begin{center}
\newcolumntype{Z}{>{\centering\arraybackslash}X}
\renewcommand{\tabularxcolumn}[1]{>{\arraybackslash}m{#1}}
\begin{tabularx}{.7\textwidth}{ZZ}
\begin{prooftree}
\[
\pi_1
\justifies
\Gamma,A,B,\Delta\vdash C
\]
\justifies
\Gamma,B,A,\Delta\vdash C
\Ex
\end{prooftree}
&
\begin{mnet}{40}{73}
\putbox{0}{20}{40}{30}{$\Gr{\pi_1}$}
\putVline{20}{10}{10}
\puttext{20}{0}{$C$}
\putVline{5}{50}{20}
\putVline{35}{50}{20}
\qbezier(10,50)(10,60)(20,60)
\qbezier(20,60)(30,60)(30,70)
\qbezier(30,50)(30,60)(20,60)
\qbezier(20,60)(10,60)(10,70)
\puttext{0}{78}{$\Delta$}
\puttext{40}{78}{$\Gamma$}
\puttext{10}{78}{$B$}
\puttext{30}{78}{$A$}
\puttext{5}{60}{$/$}
\puttext{35}{60}{$/$}
\end{mnet}
\end{tabularx}
\end{center}

\item Weakening.  If $\pi_1$ is a proof ending in $\Gamma\vdash B$,
then we can build a proof $\pi$ of $\Gamma,A\vdash B$ using the
weakening rule, and a graph $\Gr{\pi}$ as follows, where we explicitly
mark the erasing port in the node $W$:
\begin{center}
\newcolumntype{Z}{>{\centering\arraybackslash}X}
\renewcommand{\tabularxcolumn}[1]{>{\arraybackslash}m{#1}}
\begin{tabularx}{.7\textwidth}{ZZ}
\[
\begin{prooftree}
\[
\pi_1
\justifies
\Gamma \vdash B
\]
\justifies
\Gamma, A \vdash B
\weakening
\end{prooftree}
\]
&
\begin{mnet}{60}{35}
\putbox{0}{20}{30}{20}{$\Gr{\pi_1}$}
\putVline{15}{40}{15}
\puttext{15}{47}{$/$}
\putVline{15}{10}{10}
\puttext{15}{0}{$B$}
\puttext{50}{40}{$\bullet$}
\putagent{50}{30}{$W$}
\putVline{50}{40}{10}
\puttext{50}{55}{$A$}
\puttext{10}{55}{$\Gamma$}
\end{mnet}
\end{tabularx}
\end{center}

\item Contraction. If $\pi_1$ is a proof ending in $\Gamma, A,A\vdash B$,
then we can build a proof $\pi$ of $\Gamma,A\vdash B$ using the
contraction rule, and a graph $\Gr{\pi}$, where we explicitly mark the
copying port in the node $C$:
\begin{center}
\newcolumntype{Z}{>{\centering\arraybackslash}X}
\renewcommand{\tabularxcolumn}[1]{>{\arraybackslash}m{#1}}
\begin{tabularx}{.7\textwidth}{ZZ}
\[
\begin{prooftree}
\[
\pi_1
\justifies
\Gamma, A, A \vdash B
\]
\justifies
\Gamma,A \vdash B
\contraction
\end{prooftree}
\]
&
\begin{net}{40}{60}
\putnline{5}{35}{35}
\putVline{27.3}{35}{7.7}
\putVline{12.7}{35}{7.7}
\putagent{20}{50}{$C$}
\puttext{20}{60}{$\bullet$}
\putVline{20}{60}{10}
\puttext{0}{65}{$\Gamma$}
\puttext{27}{65}{$A$}
\putbox{0}{15}{35}{20}{$\Gr{\pi_1}$}
\putVline{20}{7}{8}
\puttext{20}{0}{$B$}

\end{net}

\end{tabularx}
\end{center}

\item  If $\pi_1$ is a proof ending in $\Gamma \vdash A$ and $\pi_2$ is
a proof ending in $\Delta \vdash B$, then we can build a proof $\pi$
ending with $\Gamma, \Delta\vdash A\land B$ using the ${\land}\I$
rule, and a graph $\Gr{\pi}$, where we introduce a new node $\land\I$
corresponding to the rule, and explicitly mark its conclusion port:
\begin{center}
\newcolumntype{Z}{>{\centering\arraybackslash}X}
\renewcommand{\tabularxcolumn}[1]{>{\arraybackslash}m{#1}}
\begin{tabularx}{.7\textwidth}{ZZ}
\[
\begin{prooftree}
\[
\pi_1
\justifies
\Gamma\vdash A
\]
\qquad
\[
\pi_2
\justifies
\Delta\vdash B
\]
\justifies
\Gamma,\Delta\vdash  A\land B
\using ({\land}\I)
\end{prooftree}
\]
&
\begin{mnet}{150}{90}
\put(30,45){\framebox(30,30){$\Gr{\pi_1}$}}
\put(35,75){\line(0,1){10}}
\put(55,75){\line(0,1){10}}
\put(45,80){\makebox(0,0){$\cdots$}}
\put(45,85){\makebox(0,0)[b]{$\Gamma$}}
\put(80,45){\framebox(30,30){$\Gr{\pi_2}$}}
\put(85,75){\line(0,1){10}}
\put(105,75){\line(0,1){10}}
\put(95,80){\makebox(0,0){$\cdots$}}
\put(95,85){\makebox(0,0)[b]{$\Delta$}}
\putcircle{70}{25}{20}{${\land}\I$}
\put(70,15){\line(0,-1){10}}
\puttext{70}{15}{$\bullet$}
\put(70,0){\makebox(0,0)[t]{$A\land B$}}
\put(45,45){\line(1.3,-1){17}}
\put(95,45){\line(-1.3,-1){17}}
\end{mnet}
\end{tabularx}
\end{center}
\item If $\pi_1$ is a proof ending in  $\Gamma\vdash A\land B$, then we can
build a proof $\pi$ of $\Gamma\vdash A$ using the ${\land}\E_1$ rule, and a graph $\Gr{\pi}$, where we introduce a new node $\land\E_1$:
\begin{center}
\newcolumntype{Z}{>{\centering\arraybackslash}X}
\renewcommand{\tabularxcolumn}[1]{>{\arraybackslash}m{#1}}
\begin{tabularx}{.7\textwidth}{ZZ}
\[
\begin{prooftree}
\[
\pi_1
\justifies
\Gamma\vdash A\land B
\]
\justifies
\Gamma\vdash  A
\using ({\land}\E_1)
\end{prooftree}
\]
&
\begin{mnet}{30}{80}
\putbox{0}{45}{30}{30}{$\Gr{\pi_1}$}
\putagent{15}{25}{$\land\E_1$}
\putVline{15}{7}{8}
\putVline{15}{35}{10}
\puttext{15}{15}{$\bullet$}
\puttext{15}{0}{$A$}
\putVline{5}{75}{10}
\putVline{25}{75}{10}
\puttext{15}{78}{$\cdots$}
\puttext{15}{85}{$\Gamma$}
\end{mnet}
\end{tabularx}
\end{center}

\item If $\pi_1$ is a proof ending in  $\Gamma\vdash A\land B$, then we can
build a proof $\pi$ of $\Gamma\vdash B$ using the ${\land}\E_2$ rule, and a graph $\Gr{\pi}$ where we introduce a new node $\land\E_2$. The graph is similar to the previous case, except that we use a node $\land\E_2$ instead of $\land\E_1$, and conclude $B$ instead of $A$.

\item  If $\pi_1$ is a proof ending in $\Gamma,A\vdash B$ then we can build a proof
$\pi$ of $\Gamma \vdash A\limp B$ using the ${\limp}\I$ rule, and a graph $\Gr{\pi}$:
\begin{center}
\newcolumntype{Z}{>{\centering\arraybackslash}X}
\renewcommand{\tabularxcolumn}[1]{>{\arraybackslash}m{#1}}
\begin{tabularx}{.7\textwidth}{ZZ}
\begin{prooftree}
\[
\pi_1
\justifies
\Gamma,A\vdash B
\]
\justifies
\Gamma\vdash A \limp B
\impintro
\end{prooftree}
&
\begin{mnet}{200}{120}
\put(70,60){\framebox(40,30){$\Gr{\pi_1}$}}
\put(75,90){\line(0,1){10}}
\put(90,90){\line(0,1){10}}
\put(82.5,95){\makebox(0,0){$\cdots$}}
\put(82.5,130){\makebox(0,0){$\Gamma$}}
\put(82.5,123){\makebox(0,0){$\cdots$}}
\put(90,60){\line(0,-1){15}}
\putcircle{90}{35}{20}{${\limp}\I$}
\put(90,25){\line(0,-1){10}}
\puttext{90}{24}{$\bullet$}
\put(90,10){\makebox(0,0)[t]{$A\limp B$}}
\qbezier(100,90)(120,110)(130,90)
\qbezier(97.3,41.3)(140,60)(130,90)
\put(82,110){\oval(40,20)}
\putHline{50}{35}{30}
\putVline{50}{35}{75}
\putHline{50}{110}{12}
\put(75,120){\line(0,1){10}}
\put(90,120){\line(0,1){10}}
\puttext{62}{110}{$\bullet$}
\puttext{82}{110}{$s$}
\end{mnet}
\end{tabularx}
\end{center}
where we have introduced a new node $\limp\!\!\!\I$ corresponding to the
rule, and a second node $s$ with variable arity (formally, a family of nodes)
that represents the scope of the rule:
when the assumption $A$ is discharged we mark all the other
assumptions. This structure plays a role to visually represent scope,
but more importantly, it will play a crucial role in the dynamics of
cut-elimination that we give later. We remark that if $\Gamma$ is
empty (i.e.\ the proof is closed), then we do not need this extra
node, and will just draw the $\limp\I$ node.
Note also that the node  $s$ does not correspond to a connective.

\item  If $\pi_1$ is a proof ending in $\Gamma\vdash A\limp B$ and
$\pi_2$ is a proof ending in $\Delta\vdash A$, then we can build a
proof $\pi$ of $\Gamma,\Delta \vdash B$ using the $\limp\E$ rule, and a graph $\Gr{\pi}$, where we introduce a new node $\limp\E$:
\begin{center}
\newcolumntype{Z}{>{\centering\arraybackslash}X}
\renewcommand{\tabularxcolumn}[1]{>{\arraybackslash}m{#1}}
\begin{tabularx}{.7\textwidth}{ZZ}
\begin{prooftree}
\[
\pi_1
\justifies
\Gamma\vdash A \limp B
\]
\qquad
\[
\pi_2
\justifies
\Delta\vdash A
\]
\justifies
\Gamma,\Delta\vdash  B
\using ({\limp}\E)
\end{prooftree}
&
\begin{mnet}{150}{90}
\put(30,45){\framebox(30,30){$\Gr{\pi_1}$}}
\put(35,75){\line(0,1){10}}
\put(55,75){\line(0,1){10}}
\put(45,80){\makebox(0,0){$\cdots$}}
\put(45,85){\makebox(0,0)[b]{$\Gamma$}}
\put(80,45){\framebox(30,30){$\Gr{\pi_2}$}}
\put(85,75){\line(0,1){10}}
\put(105,75){\line(0,1){10}}
\put(95,80){\makebox(0,0){$\cdots$}}
\put(95,85){\makebox(0,0)[b]{$\Delta$}}
\putcircle{70}{25}{20}{${\limp}\E$}
\put(70,15){\line(0,-1){10}}
\puttext{70}{15}{$\bullet$}
\put(70,0){\makebox(0,0)[t]{$B$}}
\put(45,45){\line(1.3,-1){17}}
\put(95,45){\line(-1.3,-1){17}}
\end{mnet}
\end{tabularx}
\end{center}
\end{itemize}
The set of node names introduced for the logic is $\NLog= \{C,W,\land\I,\land\E_1,\land\E_2,\limp\!\!\!\I, \limp\!\!\!\E, s\}$.

\subsection{Example}
To illustrate the translation, we give here an example proof and the
corresponding graph. The proof is of $A \land B \vdash B\land A$,
which shows that $\land$ is commutative, and in the graph the formulae
on the edges can be read from the proof:
\begin{center}
\newcolumntype{Z}{>{\centering\arraybackslash}X}
\renewcommand{\tabularxcolumn}[1]{>{\arraybackslash}m{#1}}
\begin{tabularx}{\textwidth}{ZZZ}
\begin{prooftree}
\[\[\[
\justifies
A\land B \vdash A\land B \axiom\]
\justifies
A\land B\vdash B
\andelimr
\]
\qquad
\[\[\justifies A\land B \vdash A\land B\axiom\]
\justifies
A\land B\vdash A
\andeliml
\]
\justifies
A\land B,A \land B\vdash B\land A
\andintro
\]
\justifies
A \land B \vdash B\land A
\contraction
\end{prooftree}
&
\begin{mnet}{80}{85}
\putagent{40}{20}{$\land\I$}
\putagent{40}{80}{$C$}
\putagent{10}{50}{$\land\E_2$}
\putagent{70}{50}{$\land\E_1$}
\putVline{40}{0}{10}
\putVline{40}{90}{10}
\putline{47.3}{27.3}{1}{1}{15.9}
\putline{17.3}{57.3}{1}{1}{15.9}
\putline{17.3}{42.7}{1}{-1}{15.9}
\putline{47.3}{72.7}{1}{-1}{15.9}
\puttext{40}{90}{$\bullet$}
\puttext{40}{10}{$\bullet$}
\puttext{17.3}{42.7}{$\bullet$}
\puttext{63.7}{42.7}{$\bullet$}
\end{mnet}
\end{tabularx}
\end{center}
In this example, we can see the true structure of the underlying
proof. Contraction was the last rule used in the derivation and is the last
rule in the diagram, at the top since the structural rules work on 
the left of the $\vdash$ symbol and at the top of the diagrams (see the
translation above).

Note that we have chosen to use the multiplicative presentation of
intuitionistic logic in the sequent calculus, additive is an
alternative, and this would lead to a different graphical
representation. An essential issue is that the graphical representation is
faithful to this choice. In this example, we invite the reader to
apply one more rule (${\limp}\I$) to give a proof of $\vdash A \land B
\limp B\land A$.

A second example illustrates one of the axioms of intuitionistic
logic: $A\limp B\limp A$. The proof and the graph are the following, where we demonstrate the two different translations of ${\limp}\I$, one closed using just an ${\limp}\I$ node, and one with free variables using an $s$ node.
\begin{center}
\newcolumntype{Z}{>{\centering\arraybackslash}X}
\renewcommand{\tabularxcolumn}[1]{>{\arraybackslash}m{#1}}
\begin{tabularx}{\textwidth}{ZZZ}
\[
\begin{prooftree}
\[
\[
\[
\justifies
A\vdash A
\axiom
\]
\justifies
A,B \vdash A
\using (W)
\]
\justifies
A \vdash B\limp A
\using (\limp\I)
\]
\justifies
\vdash A\limp B\limp A
\using (\limp\I)
\end{prooftree}
\]
&
\begin{mnet}{60}{95}
\putagent{40}{20}{${\limp}\I$}
\putagent{40}{50}{${\limp}\I$}
\putagent{40}{80}{$W$}
\putagent{15}{90}{$s$}
\putVline{40}{0}{10}
\putVline{40}{30}{10}
\putVline{0}{50}{40}
\putHline{0}{50}{30}
\putHline{0}{90}{5}
\qbezier(15,100)(15,110)(30,110)
\qbezier(47.2,27.3)(95,110)(30,110)
\qbezier(40,90)(40,100)(50,100)
\qbezier(47.2,57.3)(70,100)(50,100)
\putline{32.85}{57.3}{-1}{2}{12.3}
\puttext{40}{10}{$\bullet$}\puttext{40}{40}{$\bullet$}
\puttext{40}{90}{$\bullet$}
\puttext{5}{90}{$\bullet$}

\end{mnet}
\end{tabularx}
\end{center}

\subsection{Normalisation}\label{sec:graphnorm}

Next we turn to the normalisation process in this graphical setting.
Our aim is to show graph transformations, formalised as port graph
rewrite rules, for each of the normalisation steps given in
Definition~\ref{def:norm}.

\begin{itemize}
\item $({\land}\I)$ followed by $({\land}\E_1)$: we attach the weakening node $W$ to the graph representing
$\pi_2$.  There is a symmetric case for $({\land}\I)$ followed by $({\land}\E_2)$.
\begin{net}{180}{97}
\putbox{0}{70}{30}{20}{$\Gr{\pi_1}$}
\putbox{50}{70}{30}{20}{$\Gr{\pi_2}$}

\putbox{110}{70}{30}{20}{$\Gr{\pi_1}$}

\putagent{40}{50}{$\land\I$}
\putagent{40}{20}{$\land\E_1$}

\putbox{155}{70}{30}{20}{$\Gr{\pi_2}$}
\putagent{170}{40}{$W$}
\putVline{170}{50}{20}

\putVline{40}{0}{10}
\putVline{40}{30}{10}

\putnline{15}{90}{15}
\putnline{65}{90}{15}
\putnline{125}{90}{15}
\putnline{170}{90}{15}

\putline{15}{70}{1}{-1}{15.8}
\putline{65}{70}{-1}{-1}{15.8}

\putVline{125}{50}{20}
\puttext{125}{40}{$A$}

\puttext{45}{0}{$A$}
\puttext{90}{50}{$\Lra$}
\puttext{170}{50}{$\bullet$}
\puttext{40}{40}{$\bullet$}
\puttext{40}{10}{$\bullet$}

\end{net}
\item $({\limp}\I)$ followed by $({\limp}\E)$:
\begin{center}
\newcolumntype{Z}{>{\centering\arraybackslash}X}
\renewcommand{\tabularxcolumn}[1]{>{\arraybackslash}m{#1}}
\begin{tabularx}{0.8\textwidth}{ZZZ}
\begin{mnet}{150}{143}
\puttext{22}{130}{$s$}
\putHline{0}{130}{5}
\put(22,130){\oval(35,20)}
\put(15,140){\line(0,1){8}}
\put(30,140){\line(0,1){8}}
\puttext{5}{130}{$\bullet$}
\put(10,80){\framebox(40,30){$\Gr{\pi_1}$}}
\put(15,110){\line(0,1){10}}
\put(30,110){\line(0,1){10}}
\put(22.5,115){\makebox(0,0){$\cdots$}}
\puttext{57.3}{53.7}{$\bullet$}
\put(30,80){\line(1,-1){12.6}}
\putcircle{50}{60}{20}{${\limp}\I$}
\qbezier(40,110)(50,135)(65,125)
\qbezier(65,125)(85,110)(57.4,67.5)
\putHline{0}{60}{40}
\putVline{0}{60}{70}
\putcircle{75}{35}{20}{${\limp}\E$}
\put(57.5,52.5){\line(1,-1){10}}
\puttext{75}{25}{$\bullet$}
\put(75,25){\line(0,-1){10}}
\put(75,5){\makebox(0,0)[b]{$B$}}

\put(82.5,42.5){\line(1,1){22.6}}
\put(85,65){\framebox(40,30){$\Gr{\pi_2}$}}
\put(95,95){\line(0,1){10}}
\put(115,95){\line(0,1){10}}
\put(105,100){\makebox(0,0){$\cdots$}}
\put(105,105){\makebox(0,0)[b]{$\Delta$}}
\puttext{150}{90}{$\Lra$}

\end{mnet}
&
\begin{mnet}{140}{85}
\put(0,40){\framebox(40,30){$\Gr{\pi_1}$}}
\put(20,40){\line(0,-1){10}}
\put(20,20){\makebox(0,0)[b]{$B$}}
\put(5,70){\line(0,1){10}}
\put(20,70){\line(0,1){10}}
\put(12.5,75){\makebox(0,0){$\cdots$}}
\put(12.5,80){\makebox(0,0)[b]{$\Gamma$}}
\put(100,40){\framebox(40,30){$\Gr{\pi_2}$}}
\put(110,70){\line(0,1){10}}
\put(130,70){\line(0,1){10}}
\put(120,75){\makebox(0,0){$\cdots$}}
\put(120,80){\makebox(0,0)[b]{$\Delta$}}
\qbezier(30,70)(45,120)(75,55)
\qbezier(75,55)(105,-10)(120,40)
\end{mnet}
\end{tabularx}
\end{center}
This rule performs a substitution.  One of the most beautiful aspects
of the notation is the clear explication of the substitution process,
which can be seen as replacing an axiom in the proof $\pi_1$ with the
proof $\pi_2$.
\end{itemize}

We will also consider the following simplification rule, optimising
proofs when $C$ is followed by $W$ on the same formula. Thus, the
following proof:
\[
\begin{prooftree}
\[
\Gamma,A\vdash B
\justifies
\Gamma,A,A\vdash B
\using (W)
\]
\justifies
\Gamma,A\vdash B
\using (C)
\end{prooftree}
\]
will be represented and simplified by the following:
\begin{net}{140}{75}
\putbox{0}{20}{30}{20}{$\Gr{\pi_1}$}
\putVline{15}{10}{10}
\puttext{15}{5}{$B$}
\putnline{5}{40}{35}
\putVline{42.3}{40}{7.7}
\putVline{27.7}{40}{7.7}
\putagent{35}{55}{$C$}
\puttext{35}{65}{$\bullet$}
\putVline{35}{65}{10}
\puttext{0}{70}{$\Gamma$}
\puttext{42}{70}{$A$}
\putw{42}{30}
\puttext{80}{35}{$\Lra$}
\putbox{100}{20}{30}{20}{$\Gr{\pi_1}$}
\putVline{115}{10}{10}
\puttext{115}{5}{$B$}
\putnline{105}{40}{20}
\putVline{125}{40}{20}
\puttext{100}{55}{$\Gamma$}
\puttext{130}{55}{$A$}
\puttext{42}{40}{$\bullet$}

\end{net}

\subsection{Copy and Erase}

The copy and erase nodes, which correspond to the contraction and
weakening rules, require specific rules in order to copy or erase
proofs. This is one of the points where interaction net and proof net
representations differ.  In interaction nets, the copy and erasing
processes are performed by traversing the net after it has been
normalised (in this way, redexes cannot be copied, which ensures a
more efficient implementation). On the other hand, the low level
details of the copy and erase rules obscure the understanding of the
logic.  In the syntax described above, we are free to give global
rules for copy and erase as in proof nets (but the price to pay for
this is a more involved notion of graph rewriting, with an expensive
matching algorithm).  We are also free to choose low level,
interaction net style rules, which have a simple matching algorithm
and are better if we need to analyse the cost of the normalisation
process.

\paragraph{Erasing}
Here we show that the nets arising from the translation
function can be erased, either by global steps on $W$ nodes or using the $\epsilon$ agent to erase locally, thus showing that the weakening cut elimination step above can be fully simulated.  
\begin{itemize}
\item Erasing a node $\alpha \in \NLog$, $\alpha \neq\; \limp\!\!\!\I$:
\begin{net}{150}{53}
\putalpha{20}{40}
\putline{12.6}{45.6}{-1}{1}{7}
\putline{27.4}{45.6}{1}{1}{7}
\puttext{20}{53}{$\cdots$}
\putw{20}{10}
\putVline{20}{20}{10}
\puttext{20}{30}{$\bullet$}
\puttext{20}{20}{$\bullet$}

\puttext{55}{25}{$\Lra$}
\putw{90}{25}
\putVline{90}{35}{10}
\puttext{115}{35}{$\cdots$}
\putw{140}{25}
\putVline{140}{35}{10}
\puttext{90}{35}{$\bullet$}
\puttext{140}{35}{$\bullet$}
\end{net}

\item Erasing $\limp\!\!\!\I$:

\begin{net}{240}{115}
\put(70,50){\framebox(40,30){$\Gr{\pi_1}$}}
\puttext{62}{100}{$\bullet$}
\puttext{82}{100}{$s$}
\put(75,80){\line(0,1){10}}
\put(90,80){\line(0,1){10}}
\put(82.5,85){\makebox(0,0){$\cdots$}}
\put(82.5,115){\makebox(0,0)[b]{$\Gamma$}}
\put(90,50){\line(0,-1){10}}
\putcircle{90}{30}{20}{${\limp}\I$}
\put(90,20){\line(0,-1){10}}
\puttext{90}{20}{$\bullet$}
\putw{90}{0}
\qbezier(100,80)(120,100)(130,80)
\qbezier(96.7,36.3)(140,50)(130,80)
\put(82,100){\oval(40,20)}
\putHline{50}{30}{30}
\putVline{50}{30}{70}
\putHline{50}{100}{12}
\put(75,110){\line(0,1){10}}
\put(90,110){\line(0,1){10}}
\puttext{90}{20}{$\bullet$}
\puttext{160}{60}{$\Lra$}
\putw{190}{60}
\putVline{190}{70}{10}
\puttext{215}{60}{$\cdots$}
\putw{240}{60}
\putVline{240}{70}{10}
\puttext{240}{70}{$\bullet$}
\puttext{190}{70}{$\bullet$}
\end{net}
\end{itemize}
This transformation is a global reduction step: it requires that we
identify the graph for $\pi_1$, which in turn relies on a notion of
pattern matching that is not easy to implement (cf.\
subgraph isomorphism~\cite{Ullmann76}).

Alternatively, one can use $\epsilon$ nodes that perform small-step
erasing, in which case the pattern matching algorithm is trivial. In
this case, the previous rule has a left-hand side consisting of just
$W$, $\limp\!\!\!\I$ and $s$, and we have a reduction to:
\begin{net}{50}{108}
\puttext{40}{100}{$\bullet$}
\puttext{0}{100}{$\bullet$}
\putw{0}{90}
\putVline{0}{100}{10}
\puttext{20}{90}{$\cdots$}
\putw{40}{90}
\putVline{40}{100}{10}
\putVline{10}{55}{10}
\puttext{20}{59}{$\ldots$}
\puttext{10}{63}{$\bullet$}
\puttext{30}{63}{$\bullet$}
\putmepsilon{10}{70}
\putmepsilon{30}{70}
\putVline{30}{55}{10}
\put(0,25){\framebox(40,30){$\Gr{\pi_1}$}}
\putVline{20}{15}{10}
\putepsilon{20}{5}
\puttext{20}{15}{$\bullet$}
\end{net}
In this way, we can provide a low level definition of weakening which
is better adapted for fine grained analysis of the erasing process.
The rules for $\epsilon$, with $\alpha \in \NLog\cup \{\epsilon\}$,
are below (note that 
if $\alpha$ is $\epsilon$ the right hand side is an empty graph):
\begin{net}{150}{54}
\putalpha{20}{40}
\putline{12.6}{45.6}{-1}{1}{7}
\putline{27.4}{45.6}{1}{1}{7}
\puttext{20}{53}{$\cdots$}
\putepsilon{20}{10}
\putVline{20}{20}{10}
\puttext{20}{30}{$\bullet$}
\puttext{20}{20}{$\bullet$}
\puttext{55}{25}{$\Lra$}
\putepsilon{90}{25}
\putVline{90}{35}{10}
\puttext{115}{35}{$\cdots$}
\putepsilon{140}{25}
\putVline{140}{35}{10}
\puttext{90}{35}{$\bullet$}
\puttext{140}{35}{$\bullet$}
\end{net}
In this case the cost of erasing the graph (i.e., 
the number of rewriting steps involved) depends on its size.

\begin{lemma}[Erasing]\label{lem:erasing}
Let $N = \Gr{\pi}$, for any proof $\pi$ of $A_1,\ldots,A_{n-1} \vdash
A_n$.  Then using $n$ $\epsilon$ nodes there is a sequence of
rewriting steps that erase $N$, as shown in the following diagram, where
the right-hand side is the empty graph: 

\begin{net}{50}{58}
\putbox{0}{20}{30}{20}{$N$}
\putmepsilon{15}{3}
\putmepsilon{5}{57}
\putmepsilon{25}{57}
\putVline{15}{10}{10}
\putVline{5}{40}{10}
\putVline{25}{40}{10}
\puttext{15}{10}{$\bullet$}
\puttext{5}{50}{$\bullet$}
\puttext{25}{50}{$\bullet$}
\puttext{50}{30}{$\Lra^*$}
\puttext{15}{45}{$\ldots$}
\end{net}
\end{lemma}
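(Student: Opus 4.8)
The plan is to proceed by structural induction on the proof $\pi$ (equivalently, on the number of nodes of $N=\Gr{\pi}$), maintaining the invariant that \emph{every} free port of the graph under consideration carries an $\epsilon$ node. The lemma sets up exactly this situation for $N$: the single conclusion port and the $n-1$ hypothesis ports each receive an $\epsilon$, so we start with $n$ such agents. The engine of the argument is the family of $\epsilon$-rules stated just above: an $\epsilon$ meeting a node $\alpha\in\NLog$ at $\alpha$'s principal port erases $\alpha$ and re-emits one $\epsilon$ on each of its auxiliary ports, while two $\epsilon$ agents facing each other annihilate (the $\alpha=\epsilon$ case, whose right-hand side is empty). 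The base case is immediate: if $\pi$ is an axiom $A\vdash A$, then $N$ is a single wire with two free ports, so placing an $\epsilon$ on each end creates an active pair of two $\epsilon$ agents that annihilates, leaving the empty graph (here $n=2$).

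For the inductive step, let $\alpha$ be the node introduced last by the translation, i.e.\ the one whose distinguished conclusion port is the conclusion port of $N$. By hypothesis this port carries an $\epsilon$, forming an active pair $(\alpha,\epsilon)$, and I fire the corresponding rule. For the logical nodes ${\land}\E_1$, ${\land}\E_2$, ${\limp}\E$, ${\land}\I$ and the structural node $C$, this deletes $\alpha$ and deposits a fresh $\epsilon$ on each auxiliary port, i.e.\ on the conclusion port(s) of the immediate subproof(s) and, for $C$, on both copies of the contracted hypothesis. Since the hypotheses of $N$ are partitioned among the subproofs (disjointly for ${\land}\I$ and ${\limp}\E$, identically for the unary rules, with the contracted formula split in two for $C$) and each already carries an $\epsilon$, after this single step every free port of each immediate subproof $\Gr{\pi_i}$ carries an $\epsilon$, and the induction hypothesis erases it. The node $W$ is the degenerate case: its only port is its erasing port, wired to the weakened hypothesis, so the $\epsilon$ placed there annihilates against $W$ (no auxiliary ports to re-emit on), leaving $\Gr{\pi_1}$ with an $\epsilon$ on all its free ports. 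The exchange rule introduces no node, so the invariant transfers to $\Gr{\pi_1}$ directly.

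The main obstacle is the introduction rule ${\limp}\I$, the one node that comes paired with a scope node $s$ and is therefore not governed by the generic propagation rule. Here I invoke the dedicated small-step rule given above, whose left-hand side is precisely the configuration consisting of the $\epsilon$ on the $A\limp B$ conclusion together with the ${\limp}\I$ and $s$ nodes. Firing it consumes both the ${\limp}\I$ and the $s$ node, emits an $\epsilon$ on the conclusion $B$ and on the discharged hypothesis $A$ of $\Gr{\pi_1}$, and leaves a $W$ node on each external context port of $\Gamma$. The external $\epsilon$ agents already sitting on those $\Gamma$ ports then meet these $W$ nodes and annihilate (the $W$-versus-$\epsilon$ instance of the generic rule), while the internal context ports of $\Gr{\pi_1}$ are simultaneously covered by $\epsilon$. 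The delicate points to verify are that after these steps every free port of $\Gr{\pi_1}$ — the context $\Gamma$, the discharged $A$, and the conclusion $B$ — carries an $\epsilon$, and that the port bookkeeping matches: $\Gr{\pi_1}$ has $|\Gamma|+2$ free ports against the $|\Gamma|+1$ of $N$, the extra one being supplied by the discharge port of ${\limp}\I$. Once this is checked, the induction hypothesis applies to $\Gr{\pi_1}$ and closes the case. Finally, since each rewriting step either removes a logical or structural node or annihilates a pair of $\epsilon$ agents, a lexicographic measure on (non-$\epsilon$ nodes, $\epsilon$ nodes) guarantees that the whole sequence terminates in the empty graph.
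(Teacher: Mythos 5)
Your proof is correct and takes essentially the same route as the paper's: the paper's own argument is a one-line induction on $\pi$ deferring to the structure of the Duplication Lemma, and what you have written is exactly that induction spelled out, with the invariant that every free port carries an $\epsilon$ and the generic $\epsilon$-propagation rule driving each case. The only cosmetic mismatch is in the ${\limp}\I$/$s$ case, where the paper's dedicated small-step rule is triggered by a $W$ node rather than an $\epsilon$; since the generic $\epsilon$ rule is stated for every $\alpha\in\NLog$ (which includes both ${\limp}\I$ and $s$), the case closes either way and this does not affect correctness.
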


\begin{proof} By induction on $\pi$. 
The proof follows a similar structure to the Duplication Lemma
(see Lemma~\ref{lem:duplication} below) that we shall give later. (The case for
duplication is slightly more interesting.)\qed
\end{proof}

\paragraph{Duplication}

Next we address the issue of duplication: specifically, we
show that the graphs arising from the translation function can be
copied, either by global steps on $C$ nodes or by using the $\delta$
agent to copy step-by-step. We first show the rules for the $C$ node.
\begin{itemize}
\item Copying a node $\alpha \neq \; \limp\!\!\!\I$, $\alpha \in \NLog$:
\begin{net}{170}{60}
\puttext{20}{65}{$\cdots$}
\putalpha{20}{50}
\putline{12.6}{57.4}{-1}{1}{7}
\putline{27.4}{57.4}{1}{1}{7}
\puttext{20}{40}{$\bullet$}
\putVline{20}{30}{10}
\puttext{20}{30}{$\bullet$}
\putC{20}{20}
\putline{12.6}{12.6}{-1}{-1}{7}
\putline{27.4}{12.6}{1}{-1}{7}
\puttext{70}{35}{$\Lra$}
\putC{110}{50}
\puttext{110}{60}{$\bullet$}
\putC{150}{50}
\puttext{150}{60}{$\bullet$}
\putVline{150}{60}{10}
\puttext{130}{50}{$\cdots$}
\putVline{110}{60}{10}
\putline{117.4}{27.4}{1.6}{1}{25.5}
\putline{142.6}{27.4}{-1.6}{1}{25.5}
\qbezier(102.6,27.4)(100,35)(102.6,42.6)
\qbezier(157.4,27.4)(160,35)(157.4,42.6)
\puttext{110}{10}{$\bullet$}
\puttext{150}{10}{$\bullet$}
\putalpha{110}{20}
\putalpha{150}{20}
\putVline{110}{0}{10}
\putVline{150}{0}{10}
\end{net}

\item Copying $\limp\!\!\!\I$:
\begin{net}{340}{140}
\put(20,70){\framebox(40,30){$\Gr{\pi_1}$}}
\puttext{32}{120}{$s$}
\put(25,100){\line(0,1){10}}
\put(40,100){\line(0,1){10}}
\put(32.5,105){\makebox(0,0){$\cdots$}}
\put(32.5,135){\makebox(0,0)[b]{$\Gamma$}}
\put(40,70){\line(0,-1){20}}
\putcircle{40}{40}{20}{${\limp}\I$}
\put(40,30){\line(0,-1){10}}
\puttext{40}{29}{$\bullet$}
\putC{40}{10}
\putline{32.6}{2.6}{-1}{-1}{10}
\putline{47.4}{2.6}{1}{-1}{10}
\puttext{40}{20}{$\bullet$}
\qbezier(50,100)(70,120)(80,100)
\qbezier(46.7,46.3)(90,70)(80,100)
\put(32,120){\oval(40,20)}
\putHline{0}{40}{30}
\putVline{0}{40}{80}
\putHline{0}{120}{12}
\put(25,130){\line(0,1){10}}
\put(40,130){\line(0,1){10}}
\puttext{110}{70}{$\Lra$}
\putVline{173}{137}{10}
\puttext{173}{137}{$\bullet$}
\putC{173}{127}
\put(165,100){\line(0,1){20}}
\puttext{172}{110}{$\cdots$}
\qbezier(180,120)(180,110)(223,110)
\qbezier(223,110)(265,110)(265,100)
\putHline{140}{90}{12}
\put(160,40){\framebox(40,30){$\Gr{\pi_1}$}}
\puttext{172}{90}{$s$}
\put(165,70){\line(0,1){10}}
\put(180,70){\line(0,1){10}}
\put(172.5,75){\makebox(0,0){$\cdots$}}
\put(180,40){\line(0,-1){10}}
\qbezier(190,70)(210,90)(220,70)
\qbezier(186.7,26.3)(230,40)(220,70)
\put(172,90){\oval(40,20)}
\putHline{140}{20}{30}
\putVline{140}{20}{70}
\putcircle{180}{20}{20}{${\limp}\I$}
\put(180,10){\line(0,-1){10}}
\puttext{223}{127}{$\cdots$}
\putVline{273}{137}{10}
\puttext{273}{137}{$\bullet$}
\putC{273}{127}
\put(280,100){\line(0,1){20}}
\puttext{0272}{110}{$\cdots$}
\qbezier(180,100)(180,110)(223,110)
\qbezier(223,110)(265,110)(265,120)
\putHline{240}{90}{12}
\put(260,40){\framebox(40,30){$\Gr{\pi_1}$}}
\puttext{272}{90}{$s$}
\put(265,70){\line(0,1){10}}
\put(280,70){\line(0,1){10}}
\put(272.5,75){\makebox(0,0){$\cdots$}}
\put(280,40){\line(0,-1){10}}
\qbezier(290,70)(310,90)(320,70)
\qbezier(286.7,26.3)(330,40)(320,70)
\put(272,90){\oval(40,20)}
\putHline{240}{20}{30}
\putVline{240}{20}{70}
\putcircle{280}{20}{20}{${\limp}\I$}
\put(280,10){\line(0,-1){10}}
\puttext{280}{10}{$\bullet$}
\puttext{180}{10}{$\bullet$}
\puttext{252}{90}{$\bullet$}
\puttext{152}{90}{$\bullet$}
\puttext{12}{120}{$\bullet$}
\end{net}
\end{itemize}
Alternatively, one can use $\delta$ nodes to perform small-step
copying. In this case the left-hand side of the rule is just $C$ with 
$\limp\!\!\!\I$ and $s$, and we have a reduction to:
\begin{net}{140}{184}
\putVline{33}{177}{10}
\puttext{33}{177}{$\bullet$}
\putC{33}{167}
\put(25,140){\line(0,1){20}}
\puttext{32}{150}{$\cdots$}
\qbezier(40,160)(40,150)(63,150)
\qbezier(63,150)(85,150)(85,140)
\putHline{0}{130}{12}
\put(20,60){\framebox(120,30){$\Gr{\pi_1}$}}
\puttext{32}{130}{$s$}
\put(25,108){\line(0,1){12}}
\qbezier(40,120)(40,112)(65,112)
\qbezier(34,108)(34,112)(65,112)
\putmdelta{30}{103}
\puttext{30}{95}{$\bullet$}
\putVline{30}{90}{5}
\put(32.5,115){\makebox(0,0){$\cdots$}}
\put(32,130){\oval(40,20)}
\putHline{0}{20}{30}
\putVline{0}{20}{110}
\putcircle{40}{20}{20}{${\limp}\I$}
\put(40,10){\line(0,-1){10}}

\puttext{63}{167}{$\cdots$}

\putVline{93}{177}{10}
\puttext{93}{177}{$\bullet$}
\putC{93}{167}
\put(100,140){\line(0,1){20}}
\puttext{92}{150}{$\cdots$}
\qbezier(40,140)(40,150)(63,150)
\qbezier(63,150)(85,150)(85,160)
\putHline{60}{130}{12}
\puttext{92}{130}{$s$}
\putmdelta{95}{103}
\puttext{95}{95}{$\bullet$}
\putVline{95}{90}{5}
\qbezier(85,120)(85,112)(65,112)
\qbezier(91,108)(91,112)(65,112)
\put(100,108){\line(0,1){12}}
\put(92.5,115){\makebox(0,0){$\cdots$}}
\put(80,60){\line(0,-1){5}}
\puttext{80}{53}{$\bullet$}
\putmdelta{80}{47}
\qbezier(40,30)(40,37)(60,37)
\qbezier(60,37)(75,37)(75,42)
\qbezier(85,42)(85,37)(100,37)
\qbezier(100,37)(120,37)(120,30)

\qbezier(124,109)(160,123)(170,90)
\qbezier(127,26.3)(180,50)(170,90)

\qbezier(116,109)(125,135)(154,102)
\qbezier(154,102)(170,75)(145,52)

\qbezier(48,26)(48,32)(80,32)
\qbezier(80,32)(120,32)(145,52)

\putmdelta{120}{103}
\puttext{120}{95}{$\bullet$}
\putVline{120}{90}{5}
\put(92,130){\oval(40,20)}
\putHline{60}{20}{50}
\putVline{60}{20}{110}
\putcircle{120}{20}{20}{${\limp}\I$}
\put(120,10){\line(0,-1){10}}
\puttext{40}{10}{$\bullet$}
\puttext{120}{10}{$\bullet$}
\puttext{12}{130}{$\bullet$}
\puttext{72}{130}{$\bullet$}

\end{net}
We now introduce the rewrite rules for the $\delta$ nodes.  Let
$\alpha$ be any node in $\NLog$. Then $\delta$ copies the node
and propagates itself to copy the rest of the graph,
 as shown below. The rule for $\delta$ with $\delta$
ends the duplication process. 
\begin{center}
\begin{mnet}{170}{60}
\puttext{20}{65}{$\cdots$}
\putalpha{20}{50}
\putline{12.6}{57.4}{-1}{1}{7}
\putline{27.4}{57.4}{1}{1}{7}
\puttext{20}{40}{$\bullet$}
\putVline{20}{30}{10}
\puttext{20}{30}{$\bullet$}
\putdelta{20}{20}
\putline{12.6}{12.6}{-1}{-1}{7}
\putline{27.4}{12.6}{1}{-1}{7}
\puttext{70}{35}{$\Lra$}
\putdelta{110}{50}
\puttext{110}{60}{$\bullet$}
\putdelta{150}{50}
\puttext{150}{60}{$\bullet$}
\putVline{150}{60}{10}
\puttext{130}{50}{$\cdots$}
\putVline{110}{60}{10}
\putline{117.4}{27.4}{1.6}{1}{25.5}
\putline{142.6}{27.4}{-1.6}{1}{25.5}
\qbezier(102.6,27.4)(100,35)(102.6,42.6)
\qbezier(157.4,27.4)(160,35)(157.4,42.6)
\puttext{110}{10}{$\bullet$}
\puttext{150}{10}{$\bullet$}
\putalpha{110}{20}
\putalpha{150}{20}
\putVline{110}{0}{10}
\putVline{150}{0}{10}
\end{mnet}
\qquad
\qquad
\qquad
\begin{mnet}{110}{65}
\putVline{20}{30}{10}
\putdelta{20}{50}
\puttext{20}{40}{$\bullet$}
\putline{12.6}{57.4}{-1}{1}{7}
\putline{27.4}{57.4}{1}{1}{7}
\putdelta{20}{20}
\puttext{20}{30}{$\bullet$}
\putline{27.4}{12.6}{1}{-1}{7}
\putline{12.6}{12.6}{-1}{-1}{7}
\puttext{60}{40}{$\Lra$}
\putVline{90}{10}{50}
\putVline{110}{10}{50}
\end{mnet}
\end{center}

\begin{lemma}[Duplication]\label{lem:duplication}
Let $N = \Gr{\pi}$, for any proof $\pi$ of $A_1,\ldots,A_{n-1} \vdash A_n$.  Then using $n$ $\delta$ nodes there is a sequence of rewriting steps that
  duplicates $N$, as shown in the following diagram:
\begin{net}{180}{65}

  \putmdelta{5}{54}
  \puttext{5}{47}{$\bullet$}
  \putline{0}{58.6}{-1}{1.5}{4}
  \putline{10}{58.6}{1}{1.5}{4}
  \put(5,40){\line(0,1){7}}

  \putmdelta{35}{54}
  \puttext{35}{47}{$\bullet$}
  \putline{30}{58.6}{-1}{1.5}{4}
  \putline{40}{58.6}{1}{1.5}{4}
  \put(35,40){\line(0,1){7}}
  \puttext{20}{45}{$\cdots$}

  \putbox{0}{20}{40}{20}{$N$}
  \putVline{20}{15}{5}
  \putmdelta{20}{7}
  \puttext{20}{14}{$\bullet$}
  \putline{14.6}{2.6}{-1}{-1}{7}
  \putline{25.4}{2.6}{1}{-1}{7}

  \puttext{80}{30}{$\Lra$}

  \putbox{100}{20}{40}{20}{$N$}
  \putVline{110}{40}{10}
  \putVline{130}{40}{10}
  \puttext{120}{50}{$\cdots$}
  \putVline{120}{10}{10}

  \putbox{150}{20}{40}{20}{$N$}
  \putVline{160}{40}{10}
  \putVline{180}{40}{10}
  \puttext{170}{50}{$\cdots$}
  \putVline{170}{10}{10}
\end{net}
\end{lemma}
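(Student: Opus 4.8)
The plan is to argue by induction on the structure of $\pi$, which, by definition of the translation $\Gr{\cdot}$, is the same as an induction on $N = \Gr{\pi}$. The guiding observation is that every $\delta$ node we attach points its principal port into $N$, so whichever $\delta$ lies on the wire incident to the node introduced by the last rule of $\pi$ will form an active pair and can drive the duplication one node at a time, exposing the immediate sub-nets to which the induction hypothesis then applies. For the base case $\pi$ is an axiom $A\vdash A$, so $n=2$ and $N$ is a single wire joining its two free ports; the two attached $\delta$ nodes meet on their principal ports and the $\delta$--$\delta$ annihilation rule reduces them to two parallel wires, which is exactly two copies of the axiom net with free ports matched. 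The exchange rule introduces no node, so $N$ differs from $\Gr{\pi_1}$ only by a crossing of two premise wires and the claim follows from the induction hypothesis on $\pi_1$, the crossing being inherited identically by both copies.

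For the remaining cases except $\limp\I$, the last rule introduces exactly one node $r\in\{C,W,\land\I,\land\E_1,\land\E_2,\limp\E\}$, and $r$ is joined through its principal port to a single free wire of $N$: its conclusion wire when $r$ is a logical node, or the affected premise wire when $r$ is $C$ or $W$. The $\delta$ on that wire forms an active pair with $r$, and applying the rule that copies $r$ yields two copies of $r$ while depositing a fresh $\delta$ on each wire that joined $r$ to its immediate sub-net(s). Because the premises of $\pi$ are distributed among the sub-proofs exactly as in Figure~\ref{fig:ind} (for instance $\Gamma$ to $\pi_1$ and $\Delta$ to $\pi_2$ under $\land\I$ and $\limp\E$), these fresh $\delta$ nodes, together with those inherited on the remaining free ports, decorate every free port of each sub-net $\Gr{\pi_i}$. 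The induction hypothesis then duplicates each $\Gr{\pi_i}$, and reconnecting the two copies of $r$ to the duplicated sub-nets produces two copies of $N$ with every attached $\delta$ consumed.

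The case $r=\limp\I$ is where the real work lies, and is the step I expect to be the main obstacle. Unlike the other nodes, $\limp\I$ is introduced together with the scope node $s$, through which the free context $\Gamma$ is threaded, so a purely local copy would not reproduce the binding structure. Here I would instead invoke the dedicated small-step $\delta$-rule for $\limp\I$ given above, whose redex is exactly $\delta$ (or $C$) together with $\limp\I$ and $s$. Firing it rearranges the net into two $\limp\I$/$s$ structures, places $\delta$ nodes on the conclusion ($B$) and premise ($\Gamma,A$) wires of a copy of $\Gr{\pi_1}$, and inserts copying nodes on the $\Gamma$ wires so that the shared context is replicated rather than merely shared; the induction hypothesis applied to $\Gr{\pi_1}$ then duplicates it, and reconnecting to the two $\limp\I$ and $s$ nodes yields the two desired copies of $N$. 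The delicate point is to verify that, once this rule has fired and the inner net is duplicated, the scope wires, the discharged-assumption wire for $A$, and the context $\Gamma$ all reconnect into two well-formed copies, and in particular that the auxiliary nodes introduced on the $\Gamma$ wires are cleanly absorbed, either by annihilation against the premise $\delta$ nodes or by the inductive duplication, leaving no stray node behind. Since the lemma only asserts the existence of a reducing sequence, exhibiting such a strategy suffices and no confluence or termination argument is required.
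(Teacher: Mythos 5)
Your proposal is correct and follows essentially the same route as the paper's own proof: induction on the structure of $\pi$, with the axiom case handled by $\delta$--$\delta$ annihilation and each inductive case by firing the $\delta$-copy rule on the node introduced by the last rule and then applying the induction hypothesis to the exposed sub-nets. You in fact go further than the paper, which only exhibits the Axiom and $({\land}\I)$ cases, by explicitly treating the $\limp\!\!\I$/$s$ case via the dedicated small-step rule.
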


\begin{proof}
By induction on the depth of the proof $\pi$. We show the cases for Axiom and $({\land}\I)$.
\begin{itemize}
\item Axiom:
\begin{net}{90}{50}
\putline{2.6}{46.6}{-1}{1}{10}
\putline{17.4}{46.6}{1}{1}{10}
\putdelta{10}{40}
\puttext{10}{30}{$\bullet$}
\putVline{10}{20}{10}
\puttext{10}{20}{$\bullet$}
\putdelta{10}{10}
\putline{2.6}{2.6}{-1}{-1}{10}
\putline{17.4}{2.6}{1}{-1}{10}
\puttext{45}{25}{$\Lra$}
\putVline{70}{0}{50}
\putVline{90}{0}{50}
\end{net}

\item $({\land}\I):$
\begin{center}
\begin{mnet}{130}{110}
\put(0,60){\framebox(30,30){$\Gr{\pi_1}$}}
\putmdelta{5}{107}
\puttext{5}{100}{$\bullet$}
\putline{0}{111.6}{-1}{1.5}{3}
\putline{10}{111.6}{1}{1.5}{3}
\put(5,90){\line(0,1){10}}
\putmdelta{25}{107}
\puttext{25}{100}{$\bullet$}
\putline{20}{111.6}{-1}{1.5}{3}
\putline{30}{111.6}{1}{1.5}{3}
\put(25,90){\line(0,1){10}}
\put(15,95){\makebox(0,0){$\cdots$}}
\put(60,60){\framebox(30,30){$\Gr{\pi_2}$}}
\putline{60}{111.6}{-1}{1.5}{3}
\putline{70}{111.6}{1}{1.5}{3}
\putmdelta{65}{107}
\puttext{65}{100}{$\bullet$}
\put(65,90){\line(0,1){10}}
\putline{80}{111.6}{-1}{1.5}{3}
\putline{90}{111.6}{1}{1.5}{3}
\putmdelta{85}{107}
\puttext{85}{100}{$\bullet$}

\put(85,90){\line(0,1){10}}
\put(75,95){\makebox(0,0){$\cdots$}}
\putcircle{45}{30}{20}{${\land}\I$}
\put(45,20){\line(0,-1){10}}
\puttext{45}{19}{$\bullet$}
\putdelta{45}{0}
\puttext{45}{10}{$\bullet$}
\putline{37.5}{-5.8}{-1}{-1}{10}
\putline{52.5}{-5.8}{1}{-1}{10}
\put(15,60){\line(1,-1){22.5}}
\put(75,60){\line(-1,-1){22.5}}
\puttext{115}{50}{$\Lra$}
\end{mnet}
\begin{mnet}{140}{105}
\put(0,60){\framebox(30,30){$\Gr{\pi_1}$}}
\putmdelta{5}{107}
\puttext{5}{100}{$\bullet$}
\putline{0}{111.6}{-1}{1.5}{3}
\putline{10}{111.6}{1}{1.5}{3}
\put(5,90){\line(0,1){10}}
\putmdelta{25}{107}
\puttext{25}{100}{$\bullet$}
\putline{20}{111.6}{-1}{1.5}{3}
\putline{30}{111.6}{1}{1.5}{3}
\put(25,90){\line(0,1){10}}
\put(15,95){\makebox(0,0){$\cdots$}}
\put(60,60){\framebox(30,30){$\Gr{\pi_2}$}}
\putline{60}{111.6}{-1}{1.5}{3}
\putline{70}{111.6}{1}{1.5}{3}
\putmdelta{65}{107}
\puttext{65}{100}{$\bullet$}
\put(65,90){\line(0,1){10}}
\putline{80}{111.6}{-1}{1.5}{3}
\putline{90}{111.6}{1}{1.5}{3}
\putmdelta{85}{107}
\puttext{85}{100}{$\bullet$}
\put(85,90){\line(0,1){10}}
\put(75,95){\makebox(0,0){$\cdots$}}
\put(15,60){\line(0,-1){10}}
\puttext{15}{50}{$\bullet$}
\putdelta{15}{40}
\putcircle{15}{0}{20}{${\land}\I$}
\put(22.4,7.5){\line(1.8,1){45}}
\qbezier(7.6,7.5)(0,18)(7.6,33)
\put(15,-10){\line(0,-1){10}}
\put(75,60){\line(0,-1){10}}
\puttext{75}{50}{$\bullet$}
\putdelta{75}{40}
\putcircle{75}{0}{20}{${\land}\I$}
\put(67.6,7.5){\line(-1.8,1){45}}
\qbezier(82.4,7.5)(90,18)(82.4,33)
\put(75,-10){\line(0,-1){10}}
\puttext{115}{50}{$\Lra^*$}
\puttext{115}{38}{${\small\mathsf{(I.H.)}}$}
\puttext{15}{-10}{$\bullet$}
\puttext{75}{-10}{$\bullet$}
\end{mnet}
\begin{mnet}{130}{105}
\put(0,60){\framebox(30,30){$\Gr{\pi_1}$}}
\put(5,90){\line(0,1){10}}
\put(15,95){\makebox(0,0){$\cdots$}}
\put(25,90){\line(0,1){10}}
\put(35,60){\framebox(30,30){$\Gr{\pi_1}$}}
\put(40,90){\line(0,1){10}}
\put(50,95){\makebox(0,0){$\cdots$}}
\put(60,90){\line(0,1){10}}
\put(42.6,17.5){\line(-1,1.6){26.5}}
\put(57.4,17.5){\line(1,1.6){26.5}}
\putcircle{50}{10}{20}{${\land}\I$}
\put(50,0){\line(0,-1){10}}

\put(70,60){\framebox(30,30){$\Gr{\pi_2}$}}
\put(75,90){\line(0,1){10}}
\put(85,95){\makebox(0,0){$\cdots$}}
\put(95,90){\line(0,1){10}}
\put(105,60){\framebox(30,30){$\Gr{\pi_2}$}}
\put(110,90){\line(0,1){10}}
\put(120,95){\makebox(0,0){$\cdots$}}
\put(130,90){\line(0,1){10}}
\put(92.4,17.5){\line(1,1.6){26.5}}
\put(77.6,17.5){\line(-1,1.6){26.5}}
\putcircle{85}{10}{20}{${\land}\I$}
\put(85,0){\line(0,-1){10}}
\puttext{50}{0}{$\bullet$}
\puttext{85}{0}{$\bullet$}

\end{mnet}
\end{center}
\end{itemize}
\qed
\end{proof}

\begin{proposition}
(Correctness) For each normalisation step transforming $\pi$ to
$\pi'$, there is a transformation from $\Gr{\pi} \Lra^* \Gr{\pi'}$.
\end{proposition}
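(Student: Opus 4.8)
The plan is to argue by case analysis on the normalisation step of Definition~\ref{def:norm}. Because the translation $\Gr{\cdot}$ is defined by induction on the structure of the proof, the redex contracted by a normalisation step occurs as a subgraph of $\Gr{\pi}$ that matches the left-hand side of the corresponding port graph rewrite rule of Section~\ref{sec:graphnorm}. For each case the task is therefore to fire that rule and then tidy up the reduct, using the Erasing (Lemma~\ref{lem:erasing}) and Duplication (Lemma~\ref{lem:duplication}) lemmas, until it is syntactically equal to $\Gr{\pi'}$.

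For the step $({\land}\I)$ followed by $({\land}\E_1)$ (and symmetrically $({\land}\E_2)$) I would first apply the $\land$-rule of Section~\ref{sec:graphnorm}. This replaces the redex by $\Gr{\pi_1}$ concluding $A$ together with $\Gr{\pi_2}$ carrying a $W$ node on its conclusion port. Since $\pi'$ is $\pi_1$ followed by weakenings that introduce the formulas of $\Delta$, the target $\Gr{\pi'}$ is just $\Gr{\pi_1}$ decorated with one $W$ node per formula of $\Delta$. It remains to propagate the $W$ node upward through $\Gr{\pi_2}$: repeated application of the global $W$-erasing rules pushes $W$ through every node of $\Gr{\pi_2}$ and deposits exactly one $W$ node on each of its free hypothesis ports, that is, on each formula of $\Delta$ (this propagation is justified by an induction on $\pi_2$ parallel to that of Lemma~\ref{lem:erasing}). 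The result coincides with $\Gr{\pi'}$, so $\Gr{\pi}\Lra^*\Gr{\pi'}$.

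The substantive case is $({\limp}\I)$ followed by $({\limp}\E)$, which realises a substitution. Here I would fire the $\limp$-rule, which deletes the $\limp\I$, $\limp\E$ and $s$ nodes and reconnects the conclusion port of $\Gr{\pi_2}$ to the single discharged-hypothesis port $A$ of $\Gr{\pi_1}$. All contraction and weakening on that hypothesis has already been absorbed into $\Gr{\pi_1}$ as $C$ and $W$ nodes sitting above the $A$ port, so the residual work is to push $\Gr{\pi_2}$ through them: each $C$ node triggers Lemma~\ref{lem:duplication}, producing a fresh copy of $\Gr{\pi_2}$ and sharing its $\Delta$-hypotheses across the copies, while a $W$ node at an unused hypothesis triggers Lemma~\ref{lem:erasing}, discarding a copy. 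The crux is a substitution-commutation claim, provable by a routine induction on $\pi_1$ mirroring the clauses of $\Gr{\cdot}$: the graph obtained from $\Gr{\pi_1}$ by plugging a copy of $\Gr{\pi_2}$ into every axiom port $A\vdash A$, with the sharing formerly carried by the contractions on $A$ now migrated onto $\Delta$, is exactly $\Gr{\pi_1'}$. Since Lemma~\ref{lem:duplication} delivers one copy of $\Gr{\pi_2}$ per axiom and Lemma~\ref{lem:erasing} removes the copies at unused hypotheses, the reduct equals $\Gr{\pi_1'}=\Gr{\pi'}$.

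The auxiliary $C$-after-$W$ simplification is immediate, since its rewrite rule is itself the required single-step derivation. The step I expect to be the main obstacle is making the substitution-commutation claim precise: one must check that the order in which Duplication and Erasing distribute the copies of $\Gr{\pi_2}$ agrees with the placement of the $C$, $W$ and $s$ nodes generated by the translation, that the scope node $s$ is consumed cleanly so that no dangling edges remain, and that the migration of sharing from $A$ onto $\Delta$ reproduces precisely the contraction structure present in $\Gr{\pi_1'}$. Once this bookkeeping is settled, the two lemmas carry out all of the real work and the remaining cases follow directly.
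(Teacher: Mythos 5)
Your proposal follows essentially the same route as the paper's own proof: fire the graph rewrite rule for the normalisation step, then for the $\land$ case push the resulting $W$ node from the conclusion port of $\Gr{\pi_2}$ up to its hypotheses by induction on $\pi_2$ using the Erasing Lemma, and for the $\limp$ case rely analogously on the Duplication Lemma to distribute copies of $\Gr{\pi_2}$ through the $C$ and $W$ nodes. In fact your write-up is considerably more explicit than the paper's two-sentence argument, particularly in isolating the substitution-commutation claim for the $\limp$ case, which the paper leaves entirely implicit.
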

\begin{proof} For the first normalisation rule, $\pi$ is a proof build from $\pi_1$ and $\pi_2$ using a $({\land}\I)$ followed by $({\land}\E_1)$ (similar for $({\land}\E_2)$), and $\Gr{\pi'}$ is a graph consisting of $\Gr{\pi_1}$ and a $W$ node attached to each hypothesis in $\Delta$ (note that $\Delta$ are the hypothesis for $\pi_2$). Using the normalisation step in Section \ref{sec:graphnorm}, from $\Gr{\pi}$ one obtains a graph consisting of $\Gr{\pi_1}$ and a $W$ node attached to the conclusion port of $\Gr{\pi_2}$. By induction on $\pi_2$, and relying on Lemma \ref{lem:erasing}, one can proof that a graph consisting to a $W$ node attached to the conclusion port of $\Gr{\pi_2}$, reduces to a graph consisting of a $W$ node attached to each hypothesis in $\Delta$.
For the second normalisation rule, we rely on Lemma
\ref{lem:duplication} to prove a similar result for $C$.
\qed
\end{proof}
\section{Graphs for the Linear \lam}
\label{sec:graphs-llc}

Now that we have seen what graph reduction is for the logic, we
briefly look at it again, but from a different perspective.  There are
standard ways of representing the \lam as graphs, and the reduction
mechanism as a graph rewriting system. We restrict this section to the
linear case which is simpler, and generalises using the structural
rules of the previous section. The general form of the translation
generating a graph from a term $t$ is the following:

\begin{net}{50}{55}
\putbox{0}{20}{50}{30}{$\Gr{t}$}
\putVline{10}{10}{10}
\putVline{40}{10}{10}
\putVline{25}{50}{10}
\puttext{25}{15}{$\cdots$}
\puttext{10}{0}{$x_1$}
\puttext{40}{0}{$x_n$}
\end{net}
where the free variables of $t$ are $x_1,\ldots,x_n$.  The translation
$\Gr{\cdot}$ is given inductively over the structure of the linear
term $t$. We shall often drop the labelling of the edges when there is
no ambiguity. For abstractions we assume (without loss of generality)
that the (unique occurrence of the) variable $x$ occurs in the
leftmost position of the free variables of $\Gr{u}$. Notice also that,
in the case of applications there will not be any common free
variables between the graphs for $u$ and $v$ by the linearity
constraint.

\begin{center}
\begin{mnet}{40}{65}
\puttext{0}{40}{$\Gr{x}$\ \ =}
\putVline{30}{25}{30}
\puttext{30}{20}{$x$}
\end{mnet}
\qquad
\qquad
\qquad
\begin{mnet}{100}{72}
\puttext{0}{40}{$\Gr{\lambda x.u}$\ \ =} 
\putbox{50}{15}{40}{20}{$\Gr{u}$}
\putVline{65}{5}{10}
\putVline{85}{5}{10}
\puttext{75}{10}{$\cdots$}
\putlam{50}{55}
\putVline{50}{65}{10}
\putline{70}{35}{-1}{1}{12.9}
\putcurve{(55,15)}{(55,0)}{(45,0)}{(30,0)}{(30,15)}
\qbezier(30,15)(30,30)(42.9,47.9)
\end{mnet}
\qquad
\qquad
\qquad
\begin{mnet}{130}{60}
\puttext{0}{40}{$\Gr{uv}$\ \ =}
\putVline{80}{60}{10}
\putbox{45}{10}{30}{20}{$\Gr{u}$}
\putbox{85}{10}{30}{20}{$\Gr{v}$}
\putapp{80}{50}
\putline{100}{30}{-1}{1}{12.9}
\putline{60}{30}{1}{1}{12.9}
\putVline{50}{0}{10}
\putVline{70}{0}{10}
\putVline{90}{0}{10}
\putVline{110}{0}{10}
\puttext{60}{5}{$\cdots$}
\puttext{100}{5}{$\cdots$}

\end{mnet}
\end{center}

As examples consider the graphs for: $(\lambda
x.x)(\lambda x.x)$ and $(\lambda xy.yx)$. 

\begin{center}
\begin{mnet}{100}{73}
\putapp{50}{60}
\putVline{50}{70}{10}
\putlam{20}{30}
\putlam{80}{30}
\putline{27.1}{37.1}{1}{1}{15.8}
\putline{72.9}{37.1}{-1}{1}{15.8}
\putcurve{(27.1,22.9)}{(40,0)}{(20,0)}{(0,0)}{(12.9,22.9)}
\putcurve{(87.1,22.9)}{(100,0)}{(80,0)}{(60,0)}{(72.9,22.9)}
\end{mnet}
\qquad\qquad
\begin{mnet}{110}{87}
\putapp{90}{30}
\putlam{70}{50}
\putlam{50}{70}
\putVline{50}{80}{10}
\putline{62.9}{57.1}{-1}{1}{6}
\putline{82.9}{37.1}{-1}{1}{6}
\putcurve{(82.9,22,9)}{(75,10)}{(62,15)}{(48,25)}{(62.9,42.9)}

\qbezier(97.1,22.9)(115,11)(100,0)
\qbezier(100,0)(90,-5)(70,-5)
\qbezier(70,-5)(40,-5)(25,15)
\qbezier(25,15)(10,35)(42.9,62.9)
\end{mnet}
\end{center}

It is not difficult to see that replacing
${\limp}\I$\footnote{Simplified to the linear case.} by $\lambda$ and
${\limp}\E$ by $@$, and changing the orientation of these graphs we
obtain exactly the same system of graph reduction given for the logic.
This all leads to visual confirmation of the Curry-Howard isomorphism,
where we can think of graphs corresponding to proofs, types
corresponding to formulas, and graph reduction to normalisation.

We now turn to reduction in the linear \lam for these graphs, and set
up a notion of \emph{linear graph reduction}.  The idea is quite
simple: we will draw the graph for the term $(\lambda x.t)u$ and
another for $t[u/x]$ and try to deduce the corresponding graph
reduction step(s). The required reduction is given by:

\begin{center}
\begin{picture}(280,130)
\put(30,20){\framebox(40,30){$\Gr{t}$}}
\put(50,20){\line(0,-1){10}}
\put(65,20){\line(0,-1){10}}
\put(57.5,15){\makebox(0,0){$\cdots$}}
\put(50,50){\line(-1,1){22.6}}
\putcircle{20}{80}{20}{$\lambda$}
\qbezier(40,20)(30,-5)(15,5)
\qbezier(15,5)(-5,20)(12.6,72.5)
\put(27.5,87.5){\line(3,2){24}}
\putcircle{60}{110}{20}{$@$}
\put(60,120){\line(0,1){10}}
\put(67.5,102.5){\line(3,-2){33}}
\put(80,50){\framebox(40,30){$\Gr{u}$}}
\put(90,50){\line(0,-1){10}}
\put(110,50){\line(0,-1){10}}
\put(100,45){\makebox(0,0){$\cdots$}}
\put(160,75){\makebox(0,0){$\Lra$}}
\put(200,10){
\begin{picture}(80,100)
\put(30,70){\framebox(40,30){$\Gr{t}$}}
\put(50,100){\line(0,1){10}}
\put(65,70){\line(0,-1){10}}
\put(50,70){\line(0,-1){10}}
\put(57.5,65){\makebox(0,0){$\cdots$}}
\put(0,20){\framebox(40,30){$\Gr{u}$}}
\put(10,20){\line(0,-1){10}}
\put(30,20){\line(0,-1){10}}
\put(20,15){\makebox(0,0){$\cdots$}}
\put(20,50){\line(1,1){20}}
\end{picture}}
\end{picture}
\end{center}

\begin{example}
Here are a couple of examples of linear graph reduction.  The first
example is the identity applied to the identity function. Now
$\Gr{(\lambda x.x)(\lambda x.x)}$ (which we call $N$) reduces to
$\Gr{\lambda x.x}$ (which we call $N'$) with one application of the
rewrite rule:
\begin{net}{170}{75}
\identity{0}{0}
\identity{60}{0}
\identity{130}{30}
\putapp{50}{60}
\putVline{50}{70}{10}
\putline{27.1}{37.1}{1}{1}{15.8}
\putline{72.9}{37.1}{-1}{1}{15.8}
\puttext{105}{50}{$\Lra$}
\putVline{150}{70}{10}
\puttext{-20}{50}{$N\equiv$}
\puttext{190}{50}{$\equiv N'$}
\end{net}
Note that there was only one graph reduction step required here: the
$\beta$-reduction step, together with the substitution, was captured
in a single rewrite. The advantage of this particular system of graph
rewriting for the \lam is that substitution is always done for free.
As a larger example, consider the term $(\lambda xy.yx)(\lambda x.x)(\lambda x.x)$.
As a graph reduction, this simply becomes the following:
\begin{net}{350}{175}
\puttext{310}{120}{$\Lra N \Lra N'$}
\putapp{80}{160}
\putapp{50}{130}
\putapp{80}{40}
\putlam{20}{100}
\putlam{50}{70}
\identity{90}{100}
\identity{60}{70}
\putVline{80}{170}{10}
\putline{102.9}{137.1}{-1}{1}{15.8}
\putline{57.1}{137.1}{1}{1}{15.8}
\putline{27.1}{107.1}{1}{1}{15.8}
\putline{72.9}{107.1}{-1}{1}{15.8}
\putline{42.9}{77.1}{-1}{1}{15.8}
\putline{72.9}{47.1}{-1}{1}{15.8}
\qbezier(42.9,62.9)(35,48)(35,40)
\qbezier(35,40)(35,30)(40,25)
\qbezier(40,25)(45,20)(55,20)
\qbezier(55,20)(62,20)(72.9,32.9)
\qbezier(12.9,92.9)(0,70)(0,55)
\qbezier(0,55)(0,30)(12,18)
\qbezier(12,18)(28,0)(55,0)
\qbezier(55,0)(80,0)(90,10)
\qbezier(90,10)(98,22)(87.1,32.9)
\puttext{150}{120}{$\Lra$}
\put(170,0){
\begin{mnet}{100}{180}
\identity{60}{40}
\identity{60}{100}
\putapp{50}{160}
\putapp{50}{100}
\putlam{20}{130}
\putVline{50}{170}{10}
\putline{72.9}{137.1}{-1}{1}{15.8}
\putline{42.9}{107.1}{-1}{1}{15.8}
\putline{27.1}{137.1}{1}{1}{15.8}
\putline{72.9}{77.1}{-1}{1}{15.8}
\qbezier(12.9,122.9)(5,112)(5,100)
\qbezier(5,100)(5,80)(25,80)
\qbezier(25,80)(33,80)(42.9,92.9)
\end{mnet}
}
\end{net}
where $N$ and $N'$ are the nets of the previous example. Thus, the
total number of graph rewrite steps is just three, corresponding
exactly to the number of $\beta$-reductions performed at the level of
syntax.  One can see that there are a lot of benefits from a graphical
notation for this simple calculus: the graph rewriting process is
\emph{local}, meaning that at any time we only rewrite the part of the
graph connecting the application and abstraction; the rest of the
graph remains unchanged. 
\end{example}

\section{Discussion and Conclusion}\label{sec:disc}

Our goal was to provide a notation for intuitionistic logic (also for
the $\lambda$-calculus through the Curry-Howard isomorphism) that
shares some of the advantages of previous graphical notions such as
proof nets and interaction nets, but also simplifies and alleviates
some of the constraints.

\begin{itemize}
\item The graphical notation brings out the structure of the proof
visually, close to the abstract syntax, and consequently we believe it
to be quite natural.

\item This notation is preserved under normalisation (computation)
which means that we can animate the process. As part of this, we can
better understand the process of normalisation and substitution.

\item As the examples show, the diagrams also alleviate much of the
syntactic clutter which helps to bring out the structure of the
underlying proof.

\item We have established that normalisation
preserves the graphical notation, but we have assumed that proofs
always come from a natural deduction proof (i.e., through a
translation). We have deliberately avoided the question as to when an
arbitrary graph built from the nodes given is a valid proof
however. These questions are difficult to solve, and until we have
established the usefulness of the notation, we need not invest effort
into this. However, it remains a very important question that will
need to be addressed.

\item For the $\lambda$-calculus, our approach and motivation is
similar to that of \cite{VEX}. Our graphs are closer to the abstract
syntax trees, and we believe this is easier to relate to the syntax in
addition to allowing the process of substitution to be controlled
precisely.

\item Since the $\lambda$-calculus is the foundational calculus
underlying functional programming, this gives a starting point for a
visual approach for this paradigm.

\end{itemize}

\bibliographystyle{abbrv} 
\bibliography{bibfile}

\end{document}